\theoremstyle{plain}
\newtheorem{theorem}{Theorem}[section]
\newtheorem{lemma}[theorem]{Lemma}
\theoremstyle{definition}
\theoremstyle{remark}
\begin{document}

\title{A Quantum Walk-Driven Algorithm for the Minimum Spanning Tree Problem under a Maximal Degree Constraint}

\author{F. S. Luiz \orcidlink{0000-0002-6375-0939}}
\affiliation{Instituto de F\'\i sica Gleb Wataghin, Universidade Estadual de Campinas, 13083-859, Campinas, SP, Brazil}
\author{F. F. Fanchini \orcidlink{0000-0003-3297-905X}}
\affiliation{Faculty of Sciences, S{\~a}o Paulo State University, 17033-360, Bauru, SP, Brazil}
\affiliation{QuaTI - Quantum Technology \& Information, 13560-161, São Carlos, SP, Brazil}
\author{Victor Hugo C. de Albuquerque}
\affiliation{Department of Teleinformatics Engineering, UFC - Federal University of Ceará, 60455-970, Fortaleza-CE, Brazil}
\author{J. P. Papa \orcidlink{0000-0002-6494-7514}}
\affiliation{Faculty of Sciences, S{\~a}o Paulo State University, 17033-360, Bauru, SP, Brazil}
\author{M. C. de Oliveira
\orcidlink{0000-0003-2251-2632}}\email{marcos@ifi.unicamp.br}
\affiliation{Instituto de F\'\i sica Gleb Wataghin, Universidade Estadual de Campinas, 13083-859, Campinas, SP, Brazil
}
\affiliation{QuaTI - Quantum Technology \& Information, 13560-161, São Carlos, SP, Brazil}
\begin{abstract}
We present a novel quantum walk-based approach to solve the Minimum Spanning Tree (MST) problem under a maximal degree constraint (MDC). By recasting the classical MST problem as a quantum walk on a graph, where vertices are encoded as quantum states and edge weights are inverted to define a modified Hamiltonian, we demonstrate that the quantum evolution naturally selects the MST by maximizing the cumulative transition probability (and thus the Shannon entropy) over the spanning tree. Our method, termed Quantum Kruskal with MDC, significantly reduces the quantum resource requirement to $\mathcal{O}(\log N)$ qubits while retaining a competitive classical computational complexity. Numerical experiments on fully connected graphs up to $10^4$ vertices confirm that, particularly for MDC values exceeding $4$, the algorithm delivers MSTs with optimal or near-optimal total weights. When MDC values are less or equal to $4$, some instances achieve a suboptimal solution, still outperforming several established classical algorithms. These results open promising perspectives for hybrid quantum-classical solutions in large-scale graph optimization.
\end{abstract}


\maketitle
\section*{Introduction}
Optimization problems involve determining the optimal solution, either a maximum or minimum, for an objective function defined over a set of variables, subject to constraints imposed by the system. These problems encompass a broad spectrum of formulations, including linear and non-linear models and discrete and continuous approaches, with applications spanning logistics, economics, engineering, and physics \cite{castillo2007process}. 

One notable optimization problem concerns the {Minimum Spanning Tree} \cite{Andrew2014, Ravi2001,fowler2017}, which holds fundamental relevance for graph theory and a wide range of applications across diverse fields \cite{Graham1985}. MSTs are commonly employed in network design to minimize the cost of connecting nodes in telecommunication, electrical, or transportation networks while ensuring full connectivity \cite{Dalal1978}. In clustering and machine learning, the MST efficiently identifies hierarchical relationships and groups data points based on minimum-cost connectivity \cite{Gower1969,Asano1988,Paivinen2005}, being also used to find prototypes (key samples) in supervised learning on Optimum-Path Forest classifiers \cite{PapaIJIST:09,PapaPR:12}. In physics and computational biology, MSTs are used to model interactions in complex systems, such as molecular structures or evolutionary trees, by identifying the simplest structure that preserves connectivity \cite{Xu2002}. In finance, MSTs have proven valuable in portfolio analysis by identifying patterns of correlation between financial assets \cite{Gan2014}. By constructing MSTs based on correlation matrices, investors can uncover the underlying structure of asset relationships, optimize diversification, and mitigate systemic risks \cite{Mantegna1999}. Furthermore, MSTs are essential to optimization problems in logistics, such as minimizing the cost of distribution networks, and play a crucial role in algorithms for image segmentation and computer vision \cite{assuncao2020,suki1984}. 

Formally, given an undirected graph $ G = (\mathcal{V}, \mathcal{E}) $, where $ \mathcal{V} $ is the set of vertices and $ \mathcal{E} $ is the set of edges, each represented as an unordered pair of vertices $ (u,v) \in \mathcal{E} $, associated with a cost $ w_{uv} $, with $w_{uv}>0$, its MST is defined as a tree $ T =(\mathcal{V}, \mathcal{E}_T) $ that spans all vertices in $ G $ while minimizing the total cost:
\begin{equation}
    c(T) = \sum_{(u,v)\in\mathcal{E}_T} w_{uv}.
\end{equation}
The MST problem belongs to the class of polynomial-time solvable problems, with classical algorithms such as Kruskal's and Prim's algorithms efficiently providing optimal solutions \cite{Kruskal1956, Prim1957}. However, when additional constraints are introduced, such as a maximum degree constraint (MDC), where each vertex in $ T $ satisfies $ \deg(v) \leq \Delta $, the complexity of the problem changes significantly. This constrained version, known as the Maximal-Degree-Constrained MST (MDC-MST), becomes NP-hard, with the associated decision problem classified as NP-complete \cite{Karp1972, Andrew2014}.  

A paradigm shift in tackling the MDC-MST problem involves reformulating it as a Quadratic Unconstrained Binary Optimization (QUBO) problem. This is relevant because, by encoding binary decision variables as eigenvalues of the Pauli-$ \sigma_z $ operator, the QUBO formulation can be mapped onto a cost function described by an Ising-type Hamiltonian \cite{Andrew2014}. In this case, the optimal solution corresponds to the ground state of this Hamiltonian, which can be approximated using variational quantum algorithms such as the Quantum Approximate Optimization Algorithm (QAOA) \cite{Zhou2020}.  
\begin{figure*}[!ht]
    \centering
    \includegraphics[width=\linewidth]{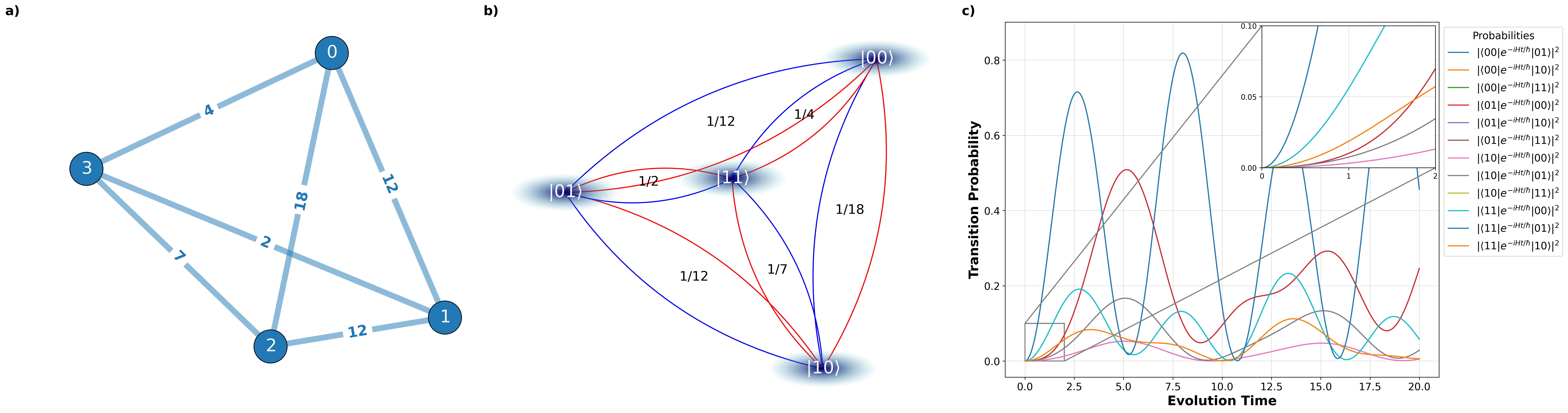}
    \caption{\textbf{Illustration of quantum-walk based MST reconstruction on a small example.} \newline a) Example with a fully connected undirected graph with four vertices $(0, 1, 2, 3)$ and random weights. b) transforming the graph into a state transition system, where the vertices were encoded in the computational base $0=\vert 00\rangle$, $1=\vert 01\rangle$, $2=\vert 10\rangle$ and $3=\vert 11\rangle$, and the coupling between the states is given by the inverse of the edge weights. c)  As time passes, the quantum walker interacts with itself, modifying the transition probabilities. However, something interesting occurred for a short time, where the self-interactions did not significantly modify the transition probabilities. In this time interval, when we take the $V-1$ largest probabilities, where $V$ is the number of vertices, $\vert\langle 11\vert e^{-i\mathcal{H}t/\hbar}\vert 01\rangle\vert^{2}$, $\vert\langle 11\vert e^{-i\mathcal{H}t/\hbar}\vert 00\rangle\vert^{2}$ and $\vert\langle 11\vert e^{-i\mathcal{H}t/\hbar}\vert 10\rangle\vert^{2}$, the tree formed by these probabilities is an MST,($3\rightarrow 1$, $3\rightarrow 0$ and $3\rightarrow 2$, with total weight $13$), having the maximum probability among all the trees that can be formed from graph a). Another surprise is that this tree also maximizes Shanon's entropy, indicating that the MST is the tree that provides the most information about the graph.}
    \label{proba_transition}
\end{figure*}
Although some approaches, such as Fowler's formulation for D-Wave quantum annealers, reduce the qubit requirements to a quadratic scaling with respect to the size of the vertex set \cite{fowler2017}, the standard Ising formulation typically requires a number of qubits that scales cubically \cite{Andrew2014}. As a result, current classical simulations, capable of handling quantum circuits with about $30$ qubits, can represent problem instances with up to approximately $5$ vertices. In contrast, quantum processors such as IBM's 127-qubit devices can, in principle, accommodate graphs with up to $11$ vertices, depending on the encoding scheme and circuit complexity. Although promising, this quantum approach remains limited in scale, underscoring the need for further advances in quantum computing to make it viable for larger instances.


To address the scalability issue, we propose a paradigm shift in the problem formulation by moving beyond the traditional QUBO framework and instead describing the problem in terms of \textit{state transitions}. In this novel approach, each vertex in the graph corresponds to a unique state of the system, while the inverse of the edge weight represents the coupling transitions between these states and during the evolution the MST branches appear encoded in the $V-1$ largest probability amplitudes of the transitions, where $V$ is the number of vertices. This perspective improves both the scalability and versatility of the algorithm, particularly for large-scale graph problems. 

This reformulation enables the solution of undirected graphs with thousands of vertices, yielding optimal solutions when the MDC exceeds 5. For graphs with MDC less than or equal to 5, some instances result in suboptimal solutions, which nevertheless remain superior or comparable to those obtained by competing methods. Our algorithm exhibits logarithmic scaling in qubit requirements, specifically $\mathcal{O}(\log_{2}(V))$, with the primary limitation on the number of vertices stemming from the graph generation process itself.

\section*{Methods}

\subsection*{Formulation via Quantum Walk and Construction of the Hamiltonian}
The graph can be represented by the Laplacian matrix \cite{squartini2017maximum}. For an undirected graph, the weighted Laplacian is constructed using the weighted adjacency matrix $A$ and the weighted degree matrix $D$, (the weighted degree matrix is called the strength matrix \cite{squartini2017maximum}). The adjacency matrix $ A $ encodes the edge weights such that $ A_{ij} = w_{ij} $, for directed edges from node $ i $ to node $ j $, and $ A_{ij} = 0 $ if no edge exists. The degree matrix $ D $ is diagonal, where each element $ D_{ii} $ corresponds to the sum of the weights of all edges connected to node $ i $. The standard Laplacian is defined as $ L = D - A $, where $ L_{ii} $ captures the total outgoing weight from node $ i $, and $ L_{ij} $ (for $ i \neq j $) is negative, representing the edge weight $ w_{ij} $ (or zero if no connection exists). Quantum graph dynamics is given by the quantum walk framework, where the Hamiltonian governing the dynamics is derived from the Laplacian of the graph \cite{Farhi1998,Schulz2024}. 

To align the Hamiltonian with the state-transition formulation, we introduce a key modification: the inverse of the non-zero edge weights is used to redefine the elements of the adjacency matrix. Specifically, $ A'_{ij} = w_{ij}^{-1} $ for $ w_{ij} \neq 0$, and the degree matrix becomes $ D'_{ii} = \sum_{j} w_{ij}^{-1} $. This adjustment ensures that higher edge weights correspond to weaker couplings between states (vertices), producing a lower transition probability between these states. Conversely, lower edge weights result in stronger couplings and higher transition probabilities. The weighted edges in graphs usually designate the distance between vertices, thus having the dimension of space. Taking this into account, for the Hamiltonian to have an energy dimension we must do $H=\hbar c L$, where $\hbar$ is Planck's constant and $c$ is the velocity, if we use the natural units $\hbar = c = 1$ and $H$ return to the usual approach $H=L$. To optimize quantum resources, we will use the binary codification; each graph vertex is encoded in a state vector represented as a tensor product of $\log_{2}(V)$ qubits. If the number of vertices $(V)$ is a multiple of two, otherwise the encoding employs $\log_{2}(V)+1$ qubits. Although the graph can also be encoded in terms of Fock states \cite{scully1997quantum}, we use qubit encoding to assess the quantum resource requirements. For graphs where the number of vertices is not a multiple of two, this encoding inherently leaves some subspaces unused, reflecting the limitations of binary codification qubit-based representation. 

With this modified Hamiltonian in place, the system's dynamics can be modeled using a \textit{continuous quantum walk}, where the {Hamiltonian} governing the evolution of the system is given as follows \cite{Qiang2024}:
\begin{equation}
    \mathcal{H} = \hbar c\sum_{ij} \left(D'_{ij}\delta_{ij}-A'_{ij}\right)\vert i\rangle\langle j\vert.\label{Hamiltonian}
\end{equation} 
where \(A'\) is the adjacency matrix with entries \(A'_{ij} = 1/w_{ij}\) and \(D'_{ii} = \sum_j A'_{ij}\).
In the continuous quantum walk framework, the walk is governed by the evolution operator $\mathcal{U}(t) = e^{i\mathcal{H}t/\hbar}$. With our evolution operator we can calculate properties associated with the graph using the walker. The property we want to verify comes from our hypothesis,\newline
\noindent
\textit{Hypothesis: Given a fully connected graph, the minimum spanning tree corresponds to the path of highest cumulative transition probabilities.} 

Formally, let $T=(\mathcal{V},\mathcal{E}_{T})$ represent the MST, and $P_{T'} = \prod_{(i, j) \in \mathcal{E}_{T'}} P(j\vert i)$ the product of transition probabilities along the edges of $T'$ that forms a spanning tree, where $P(j\vert i)$ is the transition probability from state $i$ to state $j$, defined as follows: 
\begin{equation} 
T = \arg\max_{T' \in \mathcal{T}} P_{T'},
\end{equation}
where $\mathcal{T}$ is the set of all possible spanning trees in $G$. Due to the reformulation of the problem in terms of state transitions, the transition probabilities are independent of the previous states and depend only on the states $i$ and $j$. This means that $P(j\vert i) =P(j\vert i, k)$, where $k$ is a previous state, and this type of memoryless correlation is known as Markovian \cite{breuer2002theory}. This representation should work only if the graph and the MST form a Markovian network \cite{squartini2017maximum}. The probability transition to state $j$ given the state $i$ is computed as follows:
\begin{equation}
    P(j\vert i) = \vert\langle j\vert \mathcal{U}(t)\vert i\rangle\vert^{2}  = \vert\langle j\vert e^{-i \mathcal{H}t/\hbar}\vert i\rangle\vert^{2} 
\end{equation}
where $\mathcal{H} = \sum_{k}D'_{kk}\vert k\rangle\langle k\vert - \sum_{k}\sum_{l} A'_{kl}\vert k\rangle\langle l\vert$. Here, it is worth mentioning that the parameters of interest in the Hamiltonian are the transition elements; this means that the diagonal elements, despite changing the probabilities, do not modify the order of the transition probabilities (by order, we mean from largest to smallest if changed correctly). 
This is interesting because we can rescale the Hamiltonians to decrease the probability of the system remaining in the same state and increase the other probabilities. This is very useful when using IBM's quantum computer, for example, where the probability calculations have only three significant digits.
\begin{figure}[!ht]
  \centering
  \includegraphics[width=\linewidth]{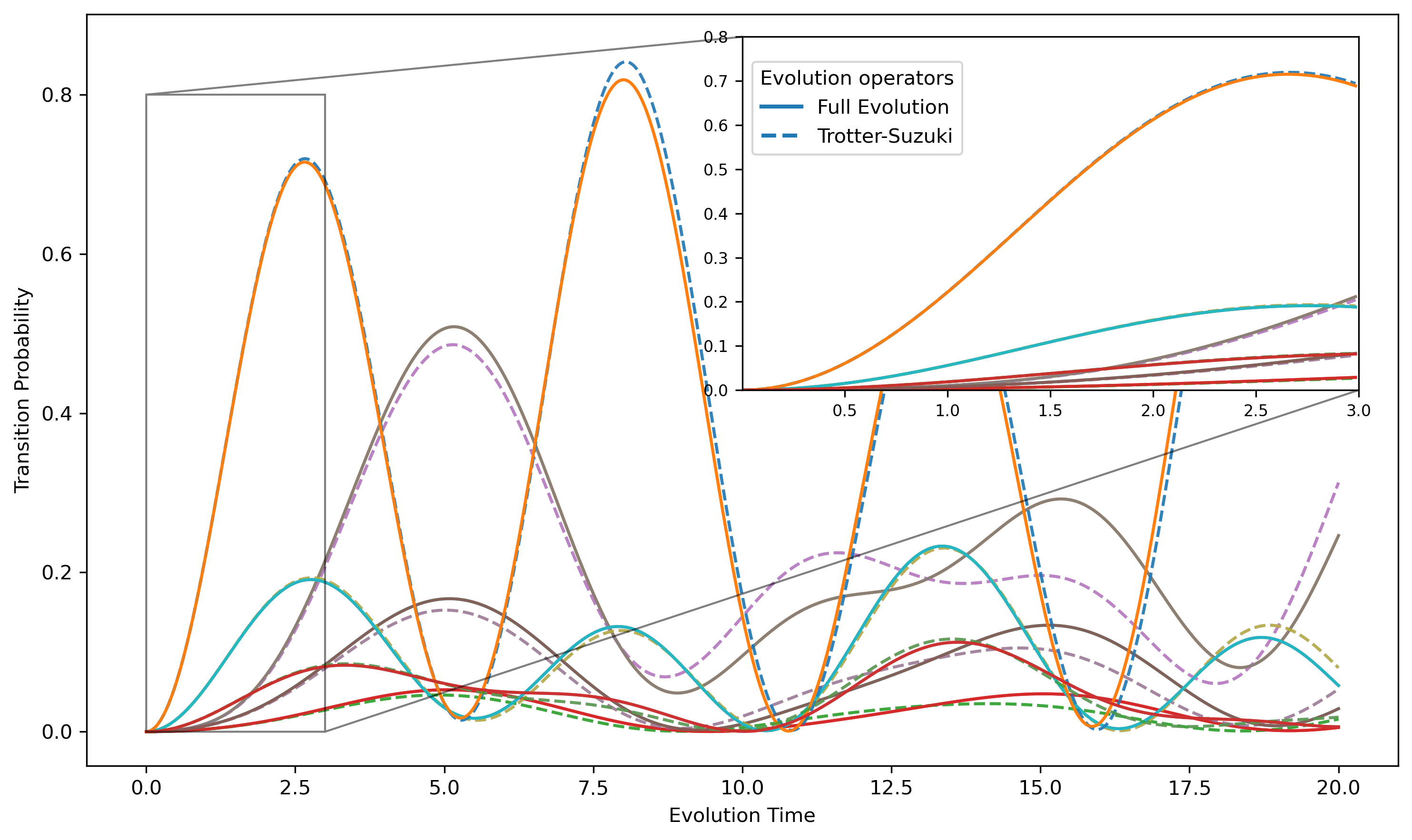}
  \caption[TS second-order approximation]{%
    \textbf{Trotter–Suzuki approximation:} \newline
    The figure shows the transition probabilities for the graph in Figure~\ref{proba_transition}a. Solid lines: full evolution operator; dashed lines: Trotter–Suzuki (n=1).
  }
  \label{TS_vs_Full_n1}
\end{figure}


The terms of the Hamiltonian  (\ref{Hamiltonian}) do not commute with each other,
\begin{equation}
\scalebox{0.85}{$\displaystyle
\left[\sum_{k}D_{kk}'\vert k\rangle\langle k\vert,\sum_{o}\sum_{l}A_{lo}\vert l\rangle\langle o \vert\right] = \sum_{p}\sum_{q}D'_{pp}A'_{pq}\left(\vert q\rangle\langle p\vert - \vert p\rangle\langle q \vert\right),
$}
\label{comut_hamiltonian}
\end{equation}
where we use that the indexes of the sum are dumb and the graph are undirected, $A'_{pq} = A'_{qp}$. The non-commutation of the terms of the Hamiltonian implies that to separate the applications of the evolution terms (mainly for use in quantum computers), it is necessary to use an approximation such as the Trotter-Suzuki approximation \cite{Hatano2005}. Using the second-order Trotter-Suzuki approximation, 
\begin{eqnarray}
     &&e^{-i\left[\sum_{i}D'_{ii}\vert i\rangle\langle i\vert - \sum_{i}\sum_{j}A'_{ij}\vert i\rangle\langle j\vert\right]\tau}\approx\left(e^{-i\frac{\tau}{2n}\sum_{i}D'_{ii}\vert i\rangle\langle i\vert}\nonumber \right.\\ &&\left. e^{\frac{\tau}{n}\sum_{i}\sum_{j}A'_{ij}\vert i\rangle\langle j\vert}e^{-i\frac{\tau}{2n}\sum_{i}D'_{ii}\vert i\rangle\langle i\vert}\right)^{n} \label{TS},
\end{eqnarray}
where $\tau$ is the evolution time and $n$ is the number of divisions (or steps of the Trotterization). Although the terms do not commute, the result is very small, remembering $D'_{pp}A'_{pq} = (\sum_{p}w^{-1}_{pq})w^{-1}_{pq}$. This fact is reflected in the reduced number of steps required to bring the Trotter-Suzuki second-order operator closer to the full evolution operator for short times (\ref{TS}). Figure \ref{TS_vs_Full_n1} exemplifies this fact for a graph \ref{proba_transition}a, where the probability of full operator evolution is the solid lines, and the second-order Trotter-Suzuki evolution for $n=1$ is the dashed lines. For just one step, the approximation gives the same result as the full evolution for a significant amount of time.      

Since we are interested in the evolution in short times, we can analyze the evolution of the system through the second-order Trotter-Suzuki approximation with $n=1$,
\begin{widetext}
\begin{eqnarray}
&&e^{-i\frac{\tau}{2}\sum_{i}D'_{ii}\vert i\rangle\langle i\vert}e^{i\tau\sum_{i,j}\sum_{j}A'_{ij}\vert i\rangle\langle j\vert}e^{-i\frac{\tau}{2}\sum_{i}D'_{ii}\vert i\rangle\langle i\vert}\vert m\rangle = e^{-i\tau D'_{mm}}\vert m\rangle + \left(i\tau\right)\sum_{i}A'_{im}e^{-i\frac{\tau}{2}\left(D'_{mm}+D'_{ii}\right)}\vert i\rangle \nonumber\\&&+ \frac{1}{2!}\left(i\tau\right)^{2} \sum_{i,j}A'_{ij}A'_{jm}e^{-i\frac{\tau}{2}\left(D'_{mm}+D'_{ii}\right)}\vert i\rangle  + \frac{1}{3!}\left(i\tau\right)^{3} \sum_{i,j,p}A'_{ij}A'_{jp}A'_{pm}e^{-i\frac{\tau}{2}\left(D'_{mm}+D'_{ii}\right)}\vert i\rangle \nonumber\\&& + \frac{1}{4!}\left(i\tau\right)^{4}\sum_{i,j,p,q} A'_{ij} A'_{jp} A'_{pq} A'_{qm} e^{-i\frac{\tau}{2}\left(D'_{mm}+D'_{ii}\right)}\vert i\rangle+\ldots,\nonumber\\ \label{TSEO}
\end{eqnarray}
\end{widetext}
where the summation indices are dummy variables. As expected from a quantum walk, the state evolution leads the walker to a superposition of all possible states, where the terms of the superposition depend on the coupling parameters. Equation (\ref{TSEO}) shows that for a short evolution time, the major contribution to superposition is the direct coupling of states of the initial states, and the more intermediate couplings in the superposition, the smaller their contribution to evolution. Using the state (\ref{TSEO}), we can calculate the probability of transition from a state $\vert m\rangle$ to a state $\vert \lambda \rangle$,
\begin{widetext}
\begin{eqnarray}
P(\lambda\vert m) &=&  \tau^{2}\omega_{\lambda m}^{-2} +\left[\frac{1}{2!}\tau^{2} \sum_{j}\omega_{\lambda j}^{-1}\omega_{jm}^{-1}\right]^{2}  +\left[ \frac{1}{3!}\tau^{3} \sum_{j}\sum_{p}\omega^{-1}_{\lambda j}\omega^{-1}_{jp}\omega^{-1}_{pm}\right]^{2} +\left[\frac{1}{4!}\tau^{4}\sum_{j}\sum_{p}\sum_{q} \omega^{-1}_{\lambda j} \omega^{-1}_{jp} \omega^{-1}_{pq} \omega^{-1}_{qm}\right]^{2}+\ldots \nonumber\\&&
 +2\left[-\frac{1}{3!}\tau^{4} \omega^{-1}_{\lambda m} \sum_{j}\sum_{p} \omega^{-1}_{\lambda j} \omega^{-1}_{jp} \omega^{-1}_{pm}- \frac{1}{2!}\tau^{6} \sum_{j}\omega^{-1}_{\lambda j} \omega^{-1}_{jm}\left(\frac{1}{4!}\sum_{j}\sum_{p}\sum_{q} \omega^{-1}_{\lambda j} \omega^{-1}_{jp} \omega^{-1}_{pq} \omega^{-1}_{qm}\right) +\ldots \right].\label{probabiityTS}
\end{eqnarray}
\end{widetext}
Here we use $A'_{ij} = w^{-1}_{ij}$, for short time the effective contribution to the probability is given by the term $P(\lambda\vert m) \approx \tau^{2}\omega^{-2}_{\lambda m}$. This leads to 
\begin{equation}
T = \arg\max_{T' \in \mathcal{T}}\prod_{(i, j) \in \mathcal{E}_{T'}} P(j\vert i)\approx \arg\max_{T' \in \mathcal{T}}\prod_{(i, j) \in \mathcal{E}_{T'}} \tau^{2}\omega^{-2}_{i j},
\end{equation}

This result confirms our hypothesis, showing that the minimization of the edge weights that leads to MST is equivalent (using our encoding) to the maximization of the probability, as follows:
\begin{equation}
    T= \arg\min_{ij \in \mathcal{T}} \omega_{ij} \rightarrow T= \arg\max_{T' \in \mathcal{T}} P_{T'}.
\end{equation}    
In fact, one can use a proof by contradiction based on edge exchange to demonstrate that the choice of $N-1$ probabilities forms an MST.
\begin{theorem}[Quantum Kruskal via Maximal Probability]
Ordering edges by decreasing probability $p(e)$ and greedily adding the highest-probability edge that does not create a cycle produces a minimum spanning tree of $G$.
\end{theorem}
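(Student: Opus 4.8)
The plan is to reduce the claim to the classical correctness proof of Kruskal's algorithm by exploiting the monotone correspondence between edge probabilities and edge weights established above. For a fixed short evolution time $\tau$, the effective transition probability obeys $p(e)\approx\tau^{2}\omega_{e}^{-2}$, a strictly decreasing function of the weight $\omega_{e}>0$; hence the linear order on $\mathcal{E}$ by decreasing $p(e)$ is identical to the order by increasing $\omega_{e}$, and the procedure in the statement is literally Kruskal's algorithm run on the weights $\omega_{e}$. It then suffices to verify the greedy/exchange property in this language.

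First I would check that the procedure returns a spanning tree at all: the cycle test keeps the selected edge set a forest throughout, and since $G$ is connected (here fully connected) the run cannot stop before $N-1$ edges are chosen, because any acyclic subgraph that is not yet spanning has some edge of $G$ joining two of its components, and that edge can still be added without closing a cycle. Denote the output $T$ and list its edges in selection order $e_{1},\dots,e_{N-1}$ with $p(e_{1})\ge\cdots\ge p(e_{N-1})$.

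Next comes the exchange step. Let $T^{\star}$ be any minimum spanning tree and suppose $T\neq T^{\star}$; take $e_{k}$ to be the first selected edge not contained in $T^{\star}$. Adding $e_{k}$ to $T^{\star}$ closes a unique cycle $C$; since $e_{1},\dots,e_{k-1}\in T^{\star}$ while $T$ is acyclic, $C$ must contain an edge $f\notin T$. The decisive observation is that $f$ was a legal (cycle-free) candidate at the moment the algorithm picked $e_{k}$, so the greedy rule gives $p(f)\le p(e_{k})$, i.e.\ $\omega_{f}\ge\omega_{e_{k}}$. Then $\tilde T=(T^{\star}\setminus\{f\})\cup\{e_{k}\}$ is again a spanning tree with $c(\tilde T)=c(T^{\star})+\omega_{e_{k}}-\omega_{f}\le c(T^{\star})$, so minimality of $T^{\star}$ forces $c(\tilde T)=c(T^{\star})$: $\tilde T$ is also an MST, but it agrees with $T$ on a strictly longer prefix of $e_{1},e_{2},\dots$. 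Iterating this swap finitely many times produces an MST equal to $T$, which proves the theorem.

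I expect the only delicate point to be the handling of ties — edges of equal weight, hence equal probability. The exchange argument must be carried out for whatever tie-breaking rule the implementation uses, and one must confirm that the cycle edge $f$ satisfies $p(f)\le p(e_{k})$ (not merely $p(f)<p(e_{k})$) under that rule; this weak inequality is exactly what is needed, since MSTs need not be unique and we only claim that $T$ is \emph{a} minimum spanning tree. A related remark, logically outside the theorem itself, is that the correspondence $p(e)\leftrightarrow\omega_{e}$ uses only the leading term of (\ref{probabiityTS}); one should record the range of $\tau$ in which the higher-order corrections there are too small to transpose any two distinct edge weights, which is precisely the short-time regime exploited throughout.
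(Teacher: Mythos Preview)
Your proof is correct and follows the same exchange argument the paper gives (the paper also offers a parallel cut-property version). Your version is in fact slightly more careful in tracking the first selected edge $e_k\notin T^{\star}$ so that $\{e_1,\dots,e_{k-1},f\}\subseteq T^{\star}$ guarantees $f$ was a legal candidate at step $k$; the remarks on ties and on the short-time validity of $p(e)\approx\tau^{2}\omega_e^{-2}$ are appropriate caveats that the paper leaves implicit.
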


\begin{proof}
We present two classical arguments adapted to the probability ordering.

\subsubsection*{1. Proof via Cut Property}
\begin{lemma}[Cut Property]
For any partition of $V$ into $S$ and $V\setminus S$, the edge of minimum weight crossing the cut belongs to some MST.
\end{lemma}

Applying the lemma: at any stage, consider the cut between the connected components formed by the already-chosen edges. The algorithm picks the edge $e$ of maximal $p(e)$ crossing that cut, which by definition is the edge of minimal $w(e)$. Hence each chosen edge is valid under the cut property, and by induction the final forest is a MST.

\subsubsection*{2. Proof via Exchange Argument}
Let $T^*$ be an arbitrary MST and let $T$ be the tree constructed by our algorithm. Suppose $T\neq T^*$. Let $e\in T\setminus T^*$ be the edge of highest probability (i.e., lowest weight) among those in $T$ but not in $T^*$. Adding $e$ to $T^*$ creates a cycle $C$. Since $e\notin T^*$, there exists an edge $f\in C$ with $f\notin T$.

Because $e$ was chosen by maximal probability before $f$, we have
\[
    p(e) \ge p(f) \quad\Longrightarrow\quad w(e) \le w(f).
\]

Removing $f$ from $T^* \cup \{e\}$ yields a new spanning tree $T' = T^* \cup \{e\} \setminus \{f\}$ whose total weight is
\[   \mathrm{w t}(T') = \mathrm{w t}(T^*) - w(f) + w(e) \le \mathrm{wt}(T^*),
\]
and thus $T'$ is also an MST. The symmetric difference between $T'$ and $T$ has strictly fewer edges than between $T^*$ and $T$, so by repeating this exchange we eventually transform $T^*$ into $T$, proving $T$ is a MST.
\end{proof}
\subsection*{Shannon Entropy}
\subsection*{Entropy of Shannon and MERW in the context of spanning‑tree selection}

To quantify the information content of the quantum walk, we compute the conditional Shannon entropy based on transition probabilities, excluding self-transitions (i.e., staying in the same vertex). Specifically, we normalize the non-zero transition probabilities for each vertex \(j\):
\[
\tilde P(i \mid j) = \frac{P(i \mid j)}{1 - P(j \mid j)}, \quad \forall\, i \ne j,
\]
where \(P(i \mid j) = \big|\langle i \vert \mathcal{U}(t) \vert j \rangle\big|^2\) is obtained from the continuous-time quantum walk evolution operator \(\mathcal{U}(t)\). We then compute the entropy as:

\[
H = - \sum_{j} \sum_{i \neq j} \tilde P(i \mid j)\, \log_2 \tilde P(i \mid j).
\]

In our experiments, we evaluate \(H\) over each spanning tree \(T' \subset \mathcal{T}\), where \(\mathcal{T}\) is the set of all spanning trees of the graph. We found that the tree maximizing \(H\) consistently coincides with the minimum spanning tree (MST).

If one further assumes that the stationary distribution of the walker is given by the \textbf{Maximal Entropy Random Walks} (MERW), then this conditional Shannon entropy corresponds exactly to the entropy rate of the Markov process induced by  MERW \cite{Sinatra2011,OchabBurda2009,Duboux2022}:

\[
H_{\mathrm{rate}} = - \sum_{j} \pi_j \sum_{i} P(i \mid j)\, \log_2 P(i \mid j).
\]

Thus, by excluding self-transitions and focusing on normalized transition probabilities, our computed Shannon entropy not only reflects the global informational complexity of the walk but—under the MERW assumption—coincides with its entropy rate, reinforcing the connection between maximizing Shannon entropy and identifying the MST.  
\section*{ Results and Discussion}
In the following subsections, we present (i) the performance and correctness of our quantum-walk algorithm for unconstrained MST reconstruction and (ii) its extension to the minimum-degree-constrained (MDC) scenario.

\subsection*{Exact MST reconstruction via quantum walk}

\begin{figure*}[!ht]
    \centering
    \includegraphics[width=\linewidth]{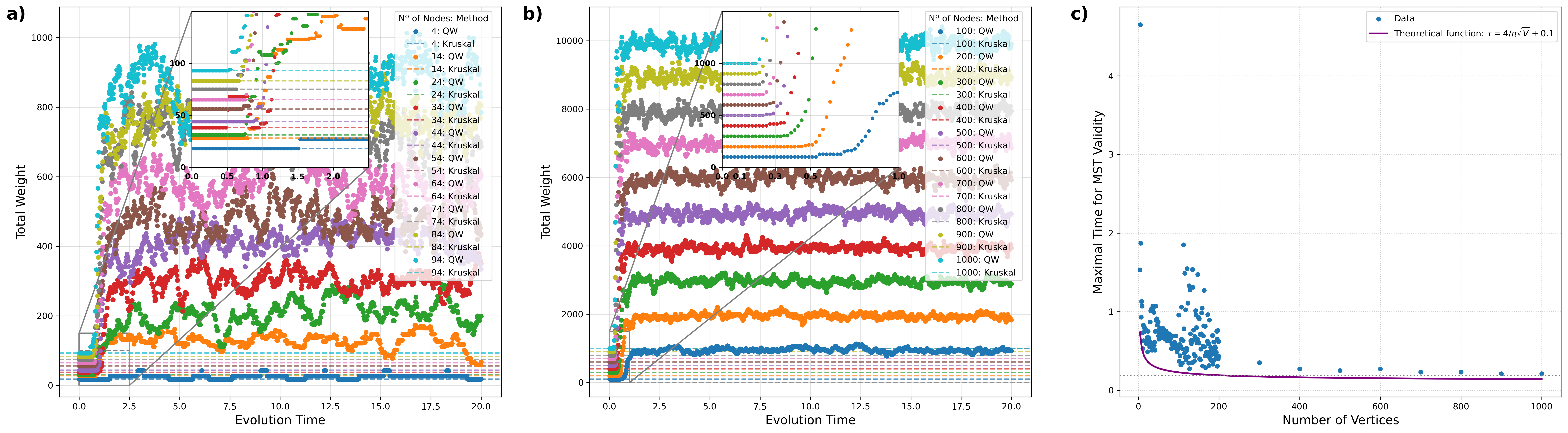}
    \caption{\textbf{Evolution of MST reconstruction over varying graph sizes and times.}\newline  
  a) For graphs with 4–94 vertices and randomly assigned edge weights between 1 and 20, solid dots indicate the total weight of the tree formed by selecting the \(V-1\) edges with the highest transition probabilities at each time \(\tau\) (sampled from 0 to 20 in increments of 0.001). The dashed line represents the true MST weight for each graph.  
  b) Same analysis as in (a), but applied to larger graphs with 100–1,000 vertices.  
  c) Each marker corresponds to \(\tau_{\max}(V)\): the largest evolution time at which the selected edges still form the MST. The purple curve fits the lower envelope of these points, representing a robust lower bound on \(\tau\) below which the algorithm consistently reconstructs the MST across all tested graph sizes and random edge-weight realizations.}
    \label{fig:time_validation}
\end{figure*}
 
Unlike other approaches that encode the solution in the ground state of a Hamiltonian, our method does not rely on ground-state preparation. Instead, it prioritizes the analysis of transition probabilities.
The system evolves according to $\mathcal{U} =e^{-i\mathcal{H}t/\hbar} = e^{-i  c t \sum_{ij} \left(D'_{ij}\delta_{ij}-A'_{ij}\right)\vert i\rangle\langle j\vert}$, the walker simultaneously explores all possible paths (in our case, all transitions), enabling rapid and efficient traversal of the entire graph within an extremely short time. This global exploration capability is a key strength of the approach. However, it also introduces a locality issue: as the walk evolves, the walker may revisit parts of the graph multiple times, leading to self-interference in the transition probabilities, as illustrated in Fig. (\ref{proba_transition}c). These self-interferences encode information about which paths the quantum walker traverses but, at the same time, obscure details of the underlying graph structure. To prevent the loss of structural fidelity, we constrain the evolution to short time intervals, ensuring that interference effects remain minimal while still capturing meaningful transition probabilities.

When analyzing the $V-1$ largest transition probabilities (avoiding loops), we observe that these probabilities form a path with the highest cumulative transition probabilities. This path corresponds to the graph's MST. 
To illustrate this phenomenon, we consider the graph shown in Figure \ref{proba_transition}a, a fully conected undirected graph with four vertices encoded as quantum states \(\vert 00\rangle\), \(\vert 01\rangle\), \(\vert 10\rangle\), and \(\vert 11\rangle\). The edge weights are inversely proportional to the coupling constants between these states  (Figure~\ref{proba_transition}b). Under the continuous-time quantum walk framework, the system evolves according to $\mathcal{U} =e^{-i\mathcal{H}t/\hbar}$. The walker simultaneously explores all possible paths (in our case, all transitions), enabling rapid and efficient traversal of the entire graph within an extremely short time. This global exploration capability is a key strength of the approach. However, it also introduces a locality issue: as the walk evolves, the walker may revisit parts of the graph multiple times, leading to self-interference in the transition probabilities. However during the early evolution stage, quantum self-interference remains limited, preserving the relative order of transition probabilities Figure \ref{proba_transition}c. We select the \(V-1\) largest transition probabilities that do not form cycles—here, \(V=4\). In this example (Figure~\ref{proba_transition}c), the three dominant probabilities are: $|\langle 11|\mathcal{U}(t)|01\rangle|^2,\quad
|\langle 11|\mathcal{U}(t)|00\rangle|^2,\quad |\langle 11|\mathcal{U}(t)|10\rangle|^2,$ corresponding to edges \(3\to1\), \(3\to0\), and \(3\to2\), respectively. These edges form a tree structure that coincides with the graph’s MST. Identifying the highest transition probabilities before self-interference becomes significant, our algorithm effectively reconstructs the MST. The global coherence of the quantum walk guarantees the method's ability to capture the optimal tree structure in short evolution times. 

To determine the time window during which the $V-1$ largest transition probabilities correspond to a valid minimum spanning tree (MST), we track the sum of weights associated with the $V-1$ highest-probability edges that avoid cycles as a function of the effective evolution time $\tau = c\,t$. Figures \ref{fig:time_validation}a and \ref{fig:time_validation}b plot the total weight of these selected edges (solid lines) alongside the corresponding MST weights (dotted lines), for graphs with vertex counts ranging from $4$ to $1000$, with effective evolution time ranging from $0$ to $20$ with $1,000$ steps. As the graph size increases, the time interval during which self-interference remains negligible and thus does not perturb the ranking of transition probabilities becomes shorter. This behavior is expected: in a complete graph—where each vertex has degree \(V-1\)—the number of available interference pathways grows with \(V\), accelerating the onset of dephasing effects in the continuous-time quantum walk \cite{Qiang2024}. Consequently, larger graphs require progressively shorter evolution times to maintain the fidelity of the transition-probability ordering essential for reconstructing the MST. A natural question is whether this trend saturates for sufficiently large graphs—that is, whether there exists a threshold evolution time below which the algorithm performs reliably irrespective of graph size.

To investigate this, we simulate the quantum walk and identify, for each graph size \(V\), the maximum evolution time \(\tau_{\max}(V)\) at which the \(V-1\) highest transition probabilities still yield the exact MST. Figure~\ref{fig:time_validation}c plots \(\tau_{\max}(V)\) as a function of \(V\). The curve enables us to explore the existence of a saturation regime in evolution time across increasing graph sizes. The purple curve in Figure~\ref{fig:time_validation}c represents \(\tau(V)\), the evolution time for which the algorithm successfully reconstructs the MST from the \(V-1\) highest transition probabilities. As \(V\) increases, \(\tau_{\max}(V)\) approaches approximately $0.1$, indicating a lower bound—below this time the algorithm remains reliable regardless of graph size. Remarkably, the purple curve in the Figure ~\ref{fig:time_validation}c are well fitted by the expression:
\[
\tau(V) = \frac{4}{\pi \sqrt{V}} + 0.1,
\]
where the term \(\frac{4}{\pi \sqrt{V}}\) directly parallels the optimal number of steps (\(\sim \tfrac{\pi}{4}\sqrt{V}\)) used in Grover's search protocol, which can be realized via quantum walks on complete graphs \cite{ChildsGoldstone2004, Childs2009, Wong2015}
. In this construction, Grover's algorithm requires about \(\tfrac{\pi}{4}\sqrt{V}\) walk steps to amplify the amplitude on a marked vertex with high probability.

This strong analogy suggests that the effective evolution time in our MST algorithm corresponds to the number of coherent walk steps needed to sample structural information from the graph—extracting transition probabilities that reflect the minimum spanning tree structure before interference effects become dominant.

The algorithm can be summarized as given the degree matrices $D$ and adjacency $A$ with weights, set up the Hamiltonian where $\mathcal{H} = D' -A'$, where the elements $D'_{ii}$ and $A'_{ij}$ are the inverses of the elements nonzero $A_{ij}$ of the corresponding matrices $D_{ii}$ and $A_{ij}$. Calculate the transition probabilities, ordering the probabilities in decreasing order. Add the edges one by one to a tree $T$, starting with the highest probability, as long as they do not form cycles. Stop when T connects all the vertices. We can refer to this algorithm as a kind of quantum Kruskal, since, like Kruskal, it performs an ordering and assembles the MST from this ordering. A pseudo-code of our algorithm can be found into table (\ref{alg:QK}).
\begin{algorithm}[H]
    \caption{Quantum Kruskal}
    \label{alg:QK}
    \begin{algorithmic}[1]
        \Require Degree matrix $D$ and adjacency matrix $A$ with weights
        \Ensure Minimum spanning tree $T$
        
        \State \textbf{Initialize} an empty tree $T$
        \State Construct the Hamiltonian: $\mathcal{H} = D' - A'$
        \ForAll{nonzero elements $D_{ii}$ and $A_{ij}$}
            \State Compute $D'_{ii} = 1 / D_{ii}$
            \State Compute $A'_{ij} = -1 / A_{ij}$
        \EndFor
        \State Compute the transition probability matrix $P$ using $\mathcal{H}$
        \State Sort all edge probabilities in descending order
        \ForAll{edges $(u,v)$ in sorted order}
            \If{adding $(u,v)$ does not form a cycle in $T$}
                \State Add $(u,v)$ to $T$
            \EndIf
            \If{$T$ spans all vertices}
                \State \textbf{Break}
            \EndIf
        \EndFor
        \State \Return $T$
    \end{algorithmic}
\end{algorithm}

First, we analyze the computational complexity of Algorithm~\ref{alg:QK}. Since the number of qubits required scales as \(\log_2 N\), and classical simulation of quantum systems is feasible up to approximately $30$ qubits, in principle it becomes possible to compute the MST of a fully connected undirected graph with \(N = 2^{30} \approx 1.07\times10^9\) vertices (hence around \(N(N-1)/2 \approx 5.76\times10^{17}\) edges) on classical hardware \cite{Schulz2024,Zhou2020B}. However, if each edge weight is stored as an $8$‑byte floating-point number, the total memory required to keep all weights in memory reaches approximately $4$ petabytes. While this is within reach for specialized data centers, it illustrates that our method—despite leveraging quantum principles—can be classified as a \textit{quantum-inspired algorithm} when executed on classical hardware.  Therefore, in our complexity analysis, we assume the entire algorithm is run classically: despite the exponential memory footprint, the polylogarithmic qubit count allows the MST to be computed for extremely large graphs using non‑quantum resources.

The complexity of the algorithm can be analyzed by considering its key computational steps:

\begin{itemize}
    \item \textbf{Hamiltonian Construction:} For a fully connected undirected graph with $N$ vertices and $E = N(N-1)/2$ edges, accessing matrix elements and computing their inverses require $\mathcal{O}(N^{2})$ operations \cite{Cohen2021}.
    \item \textbf{Quantum State Evolution:} The time evolution of the quantum state involves matrix exponentiation, which can be performed via diagonalization. In the worst case, this requires $\mathcal{O}(N^3)$ operations \cite{golub1996matrix}.
    \item \textbf{Probability Computation:} Computing the transition probabilities involves multiplying the evolution matrix by the basis vector, which incurs a worst-case cost of $\mathcal{O}(N^2)$. Thus, the overall complexity for probability computation remains $\mathcal{O}(N^3)$.
    \item \textbf{Sorting Transition Probabilities:} Sorting the computed probabilities, for example, using QuickSort, requires $\mathcal{O}(N^2 \log N)$ operations \cite{cormen2001introduction}.
    \item \textbf{Cycle Detection:} The cycle detection step is performed using the Union-Find algorithm, which operates in nearly linear time, $\mathcal{O}(E \alpha(N))$, where $\alpha(N)$ is the inverse Ackermann function, which is practically constant for real-world inputs \cite{Tarjan1975}.
\end{itemize}

The overall complexity of the algorithm is determined by the dominant step. Since the most computationally expensive operation is the quantum state evolution, the total classical complexity of the algorithm is $\mathcal{O}(N^3)$.  

Next, we analyze the potential complexity reduction when implementing the algorithm on a quantum computer. While the Hamiltonian construction remains unchanged, the computation of transition probabilities benefits from quantum speedup. Quantum algorithms based on Quantum Signal Processing or Quantum Phase Estimation can compute these probabilities in $\mathcal{O}(\text{polylog}(N))$ \cite{Motlagh2024,Kitaev1995QuantumMA}. Additionally, the sorting step can be accelerated from $\mathcal{O}(N^{2} \log N)$ to $\mathcal{O}(N^{2})$ using quantum sorting algorithms such as quantum bitonic sort \cite{Beals2013}. The cycle detection step remains unchanged, as it does not benefit significantly from known quantum speedup techniques.  Thus, if executed on a quantum computer, the overall complexity of the algorithm reduces to $\mathcal{O}(N^2)$, representing a substantial improvement over the classical Kruskal case.

Our Quantum‑Kruskal algorithm maintains the essential greedy structure of Kruskal's MST method, substituting edge weight ordering by ranking based on maximum transition probabilities. According to the cut property, the cheapest edge crossing any cut belongs to some MST. In our algorithm, the analogy is that the edge associated with the highest transition probability across a partition remains a valid choice. Therefore, each chosen edge is “safe” in that substituting it maintains a possible MST. Use a strong-induction exchange argument: assume, that after \(k\) picks, there exists an MST \(T^*\) containing the same \(k\) edges selected by our algorithm. At step \(k+1\), our algorithm selects the edge \(e\) with maximal transition probability that does not form a cycle—thus minimal weight by construction. If \(e \not\in T^*\), adding \(e\) to \(T^*\) forms a cycle that includes some edge \(f\not\in\) our selected set. Since \(w(f)\geq w(e)\), replacing \(f\) with \(e\) yields another MST. Repeating this exchange preserves optimality until our algorithm’s edge set matches an MST. Consequently, Quantum‑Kruskal returns an exact MST, under the same theoretical guarantees as the classical Kruskal algorithm

The primary advantage of our method is the dramatic reduction in the number of qubits required. Table~\ref{tab:application_qubits} compares qubit requirements across several real-world graph applications, contrasting Fowler’s surface-code estimates with our Quantum‑Kruskal algorithm \cite{fowler2017}.

For example, while Fowler’s model demands on the order of \(10^4\) logical qubits to process a $100$-node enterprise network, our approach requires only $7$ qubits. Similar efficiency gains are observed for large-scale social graphs and national electrical grids. This stark contrast highlights the efficiency of our algorithm and its potential as a quantum solution in contexts where qubit resources are constrained.

\begin{table*}[htb]
\centering
\caption{Estimated qubit requirements for different application domains and graph sizes, compared between Fowler’s estimate and the Quantum-Kruskal algorithm.}
\label{tab:application_qubits}
\begin{tabular}{lrrr}
\hline
\textbf{Application Domain} & \textbf{Typical \# of Vertices} & \textbf{Fowler Estimate (qubits)} & \textbf{Quantum‑Kruskal (qubits)} \\
\hline
Enterprise/LAN Networks            & 100             & 10,000               & 7   \\
Large-scale Internet/Social       & 175,000,000     & $3.06 \times 10^{16}$ & 28  \\
City/Regional Transportation      & 500             & 250,000              & 9   \\
Benchmark Electrical Grids        & 30              & 900                  & 5   \\
National-scale Electrical Grids   & 14,099          & $1.99 \times 10^{8}$ & 14  \\
\hline
\end{tabular}
\end{table*}


\subsubsection*{Shannon Entropy and Maximal Entropy Random Walks}

Another compelling observation is that, among all spanning trees, the one generated by selecting the \(V-1\) highest transition probabilities \textit{maximizes the Shannon entropy} \cite{Shannon1948}. According to Cayley’s theorem, a complete graph with \(N\) vertices has \(N^{N-2}\) possible spanning trees \cite{arthur_cayley_1888,aigner2013proofs}. For example,the small graph in Figure~\ref{proba_transition}a, this yields \(4^{4-2} = 16\) trees. In Figure~\ref{tw_vs_se}, we plot the total edge weight versus the Shannon entropy for each of these 16 trees. Remarkably, the tree with the highest entropy corresponds exactly to the MST.

\begin{figure}[!ht]
    \centering
    \includegraphics[width=\linewidth]{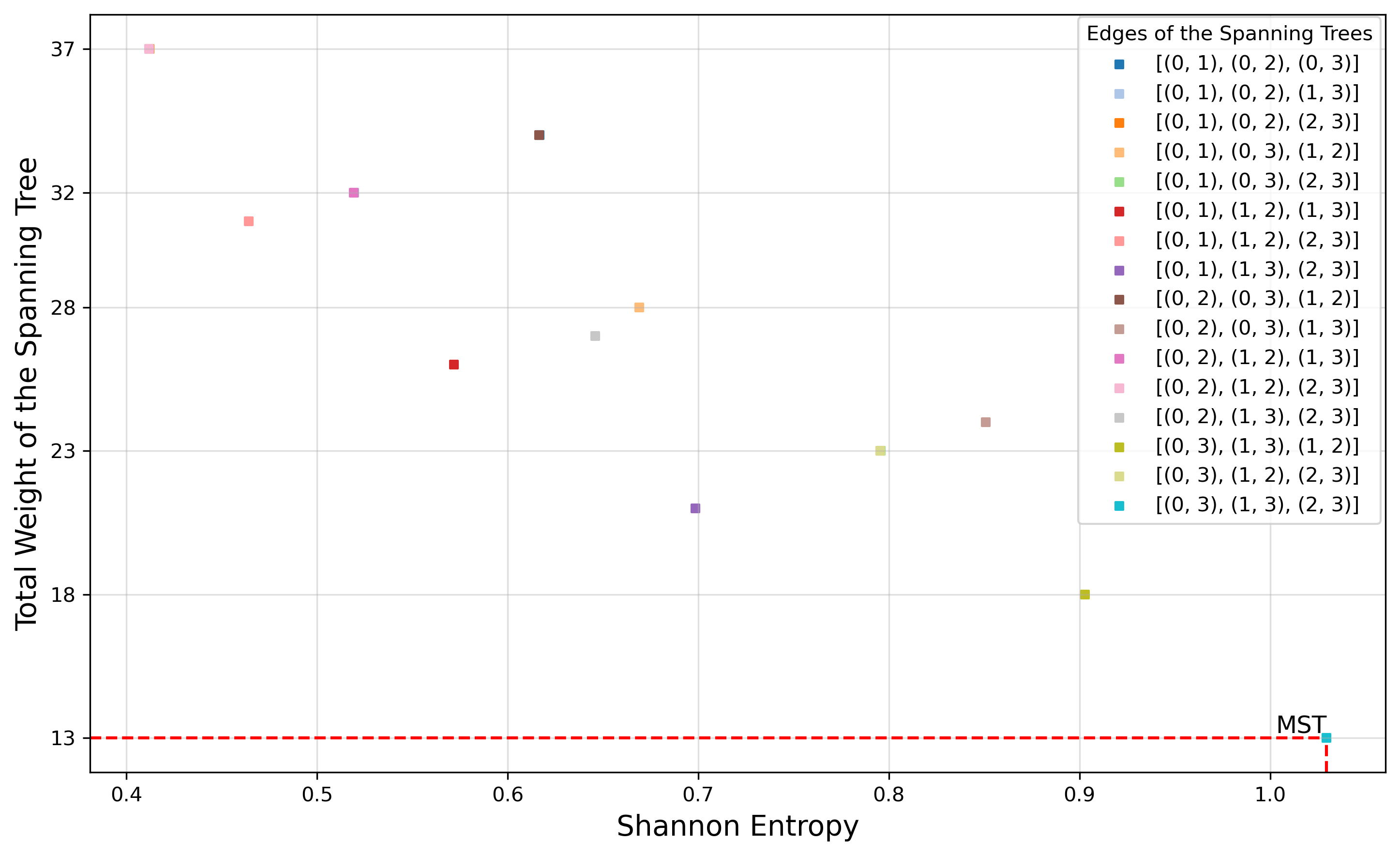}
    \caption{
    \textbf{Shannon entropy correlates with MST.}\newline     
    Plot of total tree weight versus Shannon entropy for the 16 possible spanning trees of the example graph in Figure~\ref{proba_transition}a. The MST is highlighted by the red dashed marker, which attains the maximum entropy among all trees. This result aligns with the behavior of \textbf{Maximal Entropy Random Walks (MERW)}, where transition probabilities are globally chosen to maximize entropy across all paths. In this framework, the MST emerges not only as the lightest structure but also the most informative one—suggesting a deep connection between minimal weight and maximal entropy in our quantum-walk formulation.
    }
    \label{tw_vs_se}
\end{figure}

This finding is closely tied to the theory of \textbf{Maximal Entropy Random Walks} (MERW), where transition probabilities are chosen to \textbf{globally maximize} the Shannon entropy of the entire path ensemble \cite{Sinatra2011,OchabBurda2009,Duboux2022}. In contrast to standard random walks, which only maximize entropy locally at each vertex, MERW produces a unique stationary distribution that reflects the global graph structure. Our algorithm implicitly mirrors this principle—by selecting the highest-probability edges, we identify the tree with maximal entropy, which coincides with the MST.

Thus, our method not only minimizes total edge weight but also aligns with the maximum entropy principle, underscoring the optimal information-theoretic properties embedded in the MST structure. This dual perspective further strengthens the interpretation of the MST as a configuration that is both weight-optimal and maximally informative.

\subsection*{MST-MDC via quantum walk heuristic}

We now consider the extension of our algorithm to enforce a Maximum-Degree Constraint (MDC), the insertion of a degree restriction means that the vertices (states) that make up the MST can only be connected with at most $\Delta$ vertices (states). There are two strategies to incorporate this constraint into our algorithm. The first strategy modifies the Hamiltonian by introducing an additional weighting factor in the couplings associated with the $M-\Delta$ interactions, where $M$ denotes the number of elements in matrix $A'$. This additional weight penalizes transitions that would result in a vertex exceeding the allowed degree $\Delta$, although it does not completely preclude such transitions. To maintain correctness, the penalty must be updated dynamically during the walk—requiring additional qubits to track current vertex degrees. In practice, this implies extra quantum resources and complexity.

The second strategy enforces the $\Delta$ restriction directly during the probability estimation phase. Since it is imperative to limit the number of connections per vertex in the MST according to the MDC, we integrate this restriction into the Quantum Kruskal algorithm (\ref{alg:QK}). In this modified approach, the transition probabilities are sorted in decreasing order, which effectively discards the smallest probabilities that would lead to a vertex connecting to more than $\Delta$ other vertices. 

\subsubsection*{Incorporating the Maximum-Degree Constraint (MDC)}

Our primary goal is to minimize the number of qubits required (quantum resources), so we introduce a MDC into the Quantum-Kruskal algorithm \ref{alg:QK}. The Table~\ref{alg:QK_mod} presents the pseudocode for this algorithm, which we call Quantum Kruskal with MDC. 
\begin{figure*}[!ht]
    \centering
    \includegraphics[width=\linewidth]{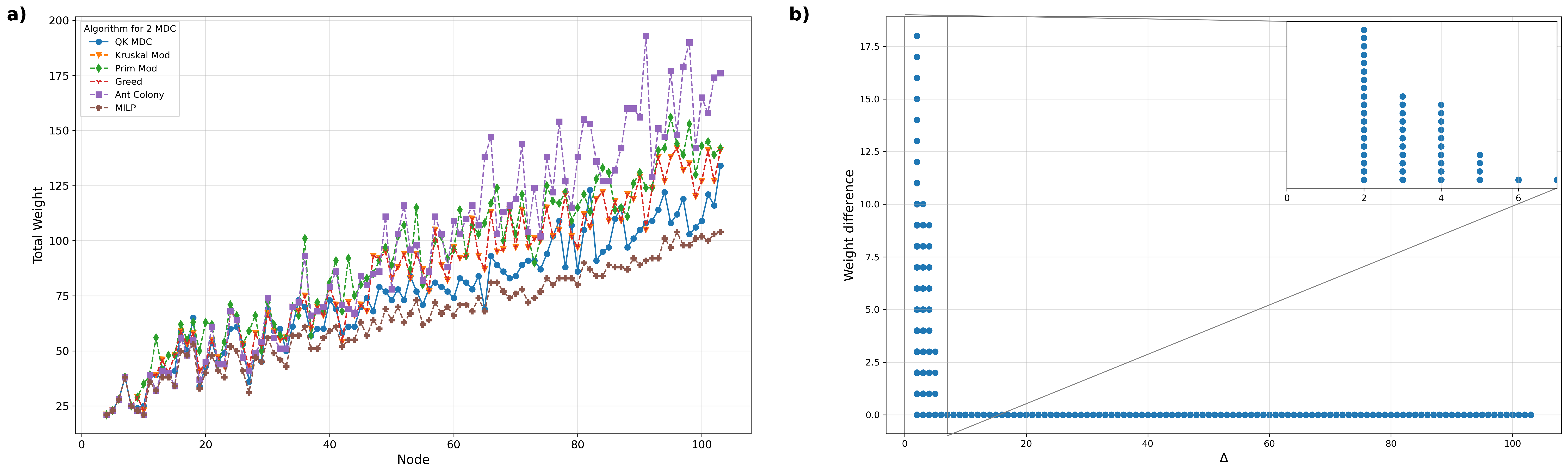}
    \caption{\textbf{Performance comparison of MST algorithms under MDC.} \newline
    Figure a) displays the total weight of the spanning tree as a function of the number of vertices for completely connected undirected graphs, ranging from $4$ to $104$ vertices in increments of $1$, where the weights were inserted randomly and with random values ranging from $1$ to $20$, with a MDC of $\Delta=2$. Six algorithms are compared: Quantum Kruskal with MDC (blue), Modified Kruskal (orange), Modified Prim (green), Greedy (red), Ant Colony (purple), and exact (MPLI-brown). Our algorithm achieved a weight equal to or lower than that of the other methods in 93 cases, since a lower total weight indicates a better result. In the remaining 7 cases, our algorithm did not achieve the smallest weight, although the difference was minimal. Moreover, as the number of vertices increases, the performance gap in favor of our algorithm becomes more pronounced, being the closest to the exact value, suggesting that it may perform even better on larger graphs. The Figure b) shows the calculation of the difference between the total weight of the MST calculated by QK-MDC and a brute force via mixed-integer linear programming algorithm (exact solution) as a function of the constraints $\Delta ={2,3,4,\ldots, 102 ,103}$. For each constraint, one thousand undirected, completely connected graphs were generated whose edges have random weights between $1$ and $53,560$.}
    \label{fig:compara}
\end{figure*}
The complexity of the algorithm is the same as the Quantum Kruskal algorithm, and the restriction does not change the dominant complexity. We have to keep in mind that when adding constraints to the system, the multiplication of the largest probabilities that respect the constraints may not lead to the most likely path. We observe that this can happen in two ways: The first is when adding the $V-2$ largest probabilities to create a spanning tree obeying the constraints, it is necessary to add a low probability so that this last probability, with the probability of another edge, could be replaced by two intermediate probabilities, resulting in an MST. The second way is when degeneracy occurs, which means that the vertex has several edges with the same weight, greater than or equal to $\Delta$. This causes the probabilities to be very close, and the algorithm can choose the wrong probability, creating a path that is not the most likely. In this sense, we can say that the system is no longer Markovian, that is, the most probable path will depend on the probabilities chosen previously. When this occurs, the spanning tree resulting from the algorithm, as expected, is no longer an MST.

Although the algorithm \ref{alg:QK_mod} does not obtain the MST when the system becomes "non-Markovian", it still obtains impressive results in these cases. Despite the MST-MDC problem is NP-hard, which makes it impossible to compare our algorithm with a brute-force approach for graphs with a large number of vertices , we can use algorithms based on mixed integer linear programming (MILP) to compute the exact MST for $104$ vertices. In figure \ref{fig:compara}a, we calculate the MST for completely connected undirected graphs, ranging from $4$ to $104$ vertices where the weights were inserted randomly and with random values ranging from $1$ to $20$, comparing the weights of the trees obtained by our algorithm with the best algorithms such as ant colony (Purple), greedy (Red), Modified Kruskal (Orange), Modified Prim (Green) and Mixed-Integer Linear Program (Brown) (which calculates the exact value).  The figure shows eight cases out of a hundred where our algorithm does not give the best result (not including the exact case, brown curve), but even in these cases, the difference is still small. For the remaining $92$ cases, our algorithm always gives a better or equal result to its competitors, second only to the exact case (Brown). In the Figure \ref{fig:compara}b, we calculate the brute force result for a thousand undirected, completely connected graphs with 104 vertices and random weights between $1$ and $53,560$, inserting MDC $\Delta ={2,3,4,\ldots, 102,103}$. The figure \ref{fig:compara}b, shows something interesting: for a MDC equal to 2, $41.32\%$ of the cases do not result in the MST, for a MDC equal to $3$, $7.7\%$ of the cases do not result in a MST for a MDC equal to $4$, $0.8\%$ and only $0.04\%$ cases in MDC for $5$, of the cases do not result in a MST and from the MDC bigger or equal to $6$ all the cases result in a MST. There is an exponential drop, reaching zero at MDC equal to $6$.

\begin{algorithm}[H]
    \caption{Quantum Kruskal with MDC}
    \label{alg:QK_mod}
    \begin{algorithmic}[1]
        \Require Degree matrix $D$, adjacency matrix $A$ with weights, and maximum allowed degree $\Delta$
        \Ensure Minimum spanning tree $T$  
        
        \State \textbf{Initialize} an empty tree $T$ and set $\deg_T(v)=0$ for all vertices $v$
        \State Construct the Hamiltonian: $\mathcal{H} = D' - A'$
        \ForAll{nonzero elements $D_{ii}$ and $A_{ij}$}
            \State Compute $D'_{ii} = 1 / D_{ii}$
            \State Compute $A'_{ij} = -1 / A_{ij}$
        \EndFor
        \State Compute the transition probability matrix $P$ using $\mathcal{H}$
        \State Sort all edge probabilities in descending order, discarding those that would connect a vertex already linked more than $\Delta$ times
        \ForAll{edges $(u,v)$ in sorted order}
            \If{$\deg_T(u) < \Delta$ \textbf{and} $\deg_T(v) < \Delta$ \textbf{and} adding $(u,v)$ does not form a cycle in $T$}
                \State Add $(u,v)$ to $T$
                \State Update $\deg_T(u) \gets \deg_T(u)+1$ and $\deg_T(v) \gets \deg_T(v)+1$
            \EndIf
            \If{$T$ spans all vertices}
                \State \textbf{Break}
            \EndIf
        \EndFor
        \State \Return $T$
    \end{algorithmic}
\end{algorithm}

It is important to note that the calculation of connection probabilities is performed in a Markovian manner, since the MDC constraint is applied only after this step. Specifically, we impose the constraint after determining the probabilities by selecting the connections with the highest probabilistic values. If a vertex already reaches the maximum number of connections allowed by the MDC constraint, any additional connection is discarded, even if it presents a high probability. Thus, although the probability assignment process is Markovian, the subsequent application of the constraint gives the system a non-Markovian characteristic, since the decision to discard a connection depends on the $\Delta$ previously calculated probabilities. It is also observed that the chance of a vertex obtaining more than four connections with high probability is infinitesimal, which ensures that the algorithm results in the MST with high probability.
\begin{table*}[htb]
\small
\setlength{\tabcolsep}{4pt}
\centering
\caption{Comparative analysis of quantum algorithms for the MST problem with MDC.}
\label{tab:algorithm_comparison}
\begin{tabularx}{\textwidth}{|l|X|X|X|}
\hline
\textbf{Criterion} & \textbf{Quantum Kruskal (proposed)} & \textbf{QAOA-MST (QUBO)} & \textbf{Fowler (D-Wave)} \\
\hline
Quantum time complexity & $O(N^2)$ & $O(p \cdot N^2)$ & $O(N^2)$ \\
\hline
Unitary evolution $U(t)$ cost & $O(\log N)$ & -- & -- \\
\hline
Classical memory usage & $O(N^2)$ & $O(N^2)$ & $O(N^2)$ \\
\hline
Number of qubits required & $O(\log_2 N)$ & $O(N^2)$ & $O(N^{1.5})$ \\
\hline
Quantum model used & CTQW (Quantum Walk) & Ising-based QAOA & Quantum Annealing \\
\hline
Algorithm type & Quantum-Inspired / CTQW & Variational Quantum & Annealing-based QUBO \\
\hline
MST accuracy (MDC $\geq 5$) & $>99\%$ & 20–90\% & $\sim$85\% \\
\hline
Degree constraint handling & Heuristic post-processing & Encoded as penalties & Encoded in QUBO clauses \\
\hline
Noise robustness & High (short-time evolution) & Medium (sensitive to gate noise) & High (low decoherence) \\
\hline
Scalability (practical) & Up to $10^4$–$10^5$ nodes (simulated) & $N \sim 20$ nodes & $N \leq 100$ nodes \\
\hline
Matrix evolution cost & $O(N^3)$ & $O(p \cdot N^6)$ & $O(N^2)$ (annealing embedded) \\
\hline
Edge selection / sorting cost & $O(N^2 \log N)$ & -- & -- \\
\hline
Compatible with NISQ devices & Yes (with short $\tau$) & Limited (depth constraints) & Yes (D-Wave) \\
\hline
Shannon entropy maximization & Yes & Not evaluated & Not evaluated \\
\hline
Encoding type & Direct binary state $|i\rangle$ & Binary via QUBO & Binary embedding \\
\hline
Generalization evidence & Yes (random graphs) & Limited (small graphs) & Yes (benchmarks) \\
\hline
\textbf{Energy consumption (simulation)} & Low (CPU/GPU) & High (QPU training cycles) & Low (annealing pulse) \\
\hline
\textbf{Execution cost per run (USD est.)} & $\sim$0.01–0.10 & $\sim$2.00–10.00 & $\sim$0.10–0.50 \\
\hline
\textbf{Infrastructure required} & Local / Simulated / NISQ & High-fidelity quantum computer & Access to D-Wave cloud \\
\hline
\textbf{Ease of integration in real-world systems} & High & Low (needs tuning) & Medium \\
\hline
\end{tabularx}
\end{table*}
\subsubsection*{Shannon Entropy and MERW under MDC}
By introducing a maximum degree constraint \(\Delta\), the set of feasible trees is drastically reduced, as many of the Cayley trees violate this local connectivity constraint. To explore the impact of this constraint on the structural uncertainty of the graph, we consider the Shannon entropy of feasible trees. 

In the context of our quantum algorithm (Quantum Kruskal with MDC) \ref{alg:QK_mod}, we observe that, surprisingly, the selected tree remains the one that maximizes entropy within this restricted subset. This means that, although the degree constraint reduces the space of possible trees, the chosen MST-MDC still occupies a central position in the maximum uncertainty distribution of the system. This property suggests that, even under structural constraints, MST-MDC preserves the trade-off between minimizing total weight and maximizing topological diversity — a behavior that is predictable only when one understands both Cayley combinatorial theory and the effects of the degree constraint on Shannon entropy.
\begin{figure}[!ht]
    \centering
    \includegraphics[width=1\linewidth]{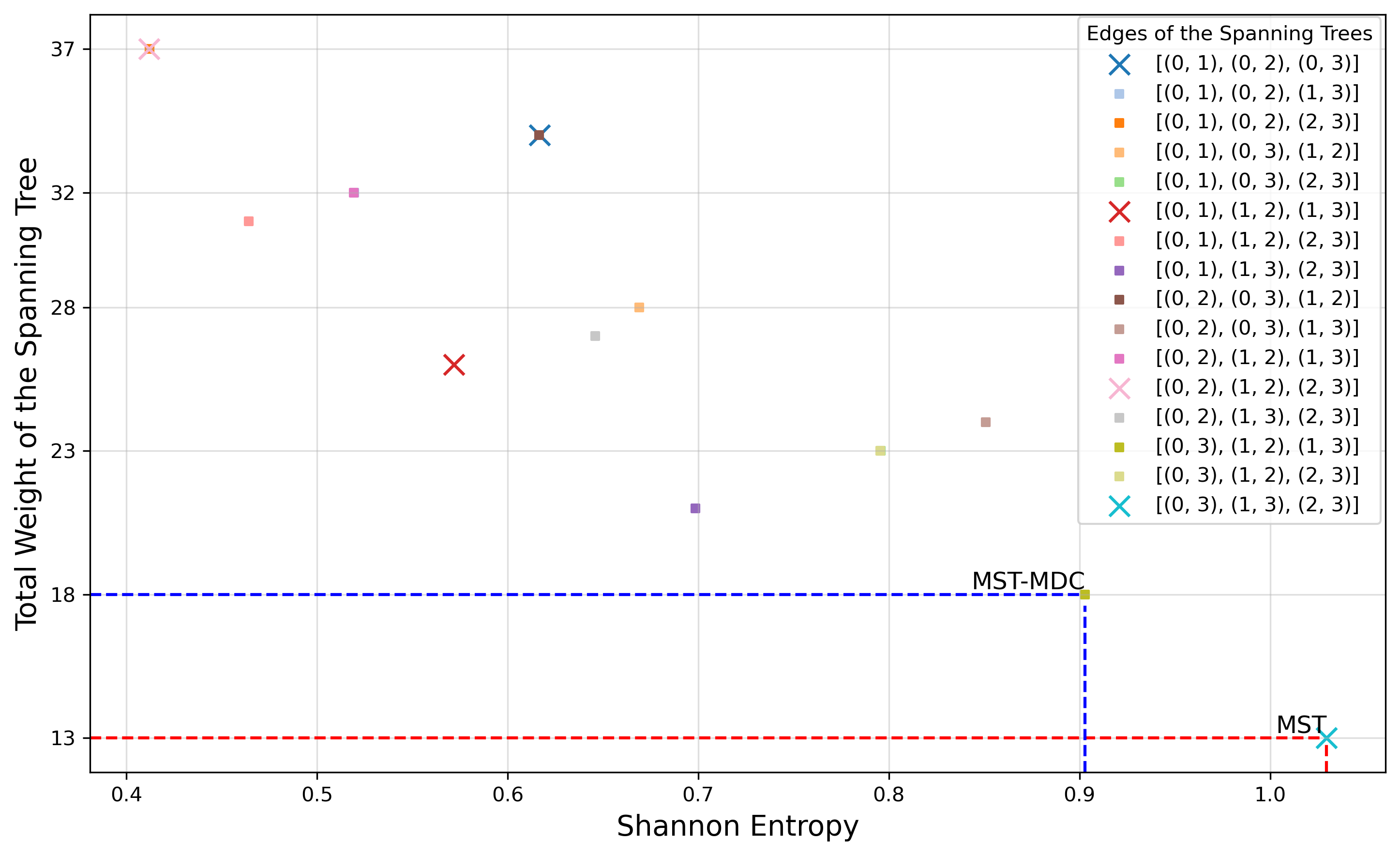}
    \caption{\textbf{Shannon entropy of MDC spanning trees.} \newline
Plot of total tree weight versus Shannon entropy for all 16 spanning trees of the example graph in Figure~\ref{proba_transition}a. Four trees violate the maximum‑degree constraint (marked with “X” symbols) and are excluded from analysis. Among the remaining 12 feasible trees (enclosed by the shaded square), the MST‑MDC identified by our algorithm (blue dashed line) achieves the highest Shannon entropy. This exemplifies that, even after imposing the degree constraint, the minimum-weight spanning tree also maximizes the structural uncertainty within the feasible subset.}
    \label{tw_vs_se_MDC}
\end{figure}

This feature is closely related to the theoretical behavior of MERW. While generic random walks distribute probabilities uniformly across local outputs, MERW globally adjusts probabilities to maximize the entropy of entire trajectories in the graph. Under degree constraint, MERW continues to select the distribution that generates the admissible trees with maximum entropy. This parallel suggests that our Quantum-Kruskal with MDC \ref{alg:QK_mod}, by prioritizing the highest probability transitions, effectively applies an adapted MERW principle: it chooses MST-MDC as the tree that represents both cost optimality and maximum information within the feasible subset.

The graph illustrated in Figure \ref{proba_transition}a highlights the selectivity imposed by the MDC. From a total of 16 spanning trees, 4 are discarded for not complying with the MDC – a group that includes the global MST (without MDC) and is visually denoted by `X' markers in Figure \ref{tw_vs_se_MDC}. In contrast, upon examining the subset of spanning trees that satisfy the MDC (highlighted by squares in Figure \ref{tw_vs_se_MDC}), we observe that the MST-MDC within this subset not only minimizes the total weight but also exhibiting the maximum Shannon entropy.

We have developed a \textbf{Quantum Kruskal} algorithm that exploits continuous-time quantum walk transition probabilities to construct the MST on a fully connected undirected graph, \textbf{drastically reducing qubit requirements} from $\mathcal{O}(N^2)$ to $\mathcal{O}(\log N)$. Under a \textbf{maximal-degree constraint (MDC)}, the method accurately produces the MST for $\Delta \ge 5$, and even for $\Delta = 2$ it outperforms competing techniques, maintaining alignment with literature benchmarks. Future work will investigate embedding the MDC directly into the Hamiltonian and examining whether the most probable quantum-walk path diverges from the classical least-action path.

Table~\ref{tab:algorithm_comparison} summarizes a comparative analysis of prominent quantum approaches to the MST–MDC problem. Our method uniquely leverages \textbf{continuous-time quantum walk (CTQW)} dynamics instead of variational circuits or ground-state preparation. This allows extraction of the MST via transition amplitudes—without requiring variational optimization or long-depth circuits—and achieves a logarithmic qubit requirement, enabling scalability to sizes ($10^4$–$10^5$ nodes) far beyond QUBO-based methods, which generally require quadratic or cubic qubit counts. Importantly, for $\Delta \ge 5$, algorithmic fidelity exceeds $99\%$, significantly outperforming QAOA and annealing-based approaches, which are limited by coherence time and parameter tuning. Our method’s reliance on \textbf{short-time coherent evolution} provides enhanced noise robustness and consistently maximizes Shannon entropy, reinforcing the preservation of structural information in the output.

Furthermore, the algorithm supports hybrid execution models with modest energy requirements and integrates naturally into real-world systems—such as smart infrastructure, water supply networks, transportation systems, and biomedical similarity graphs—underscoring its \textbf{practicality and readiness for deployment}, contrasting with high-overhead quantum or classical solvers. These results highlight the algorithm’s \textbf{novelty, computational efficiency, and immediate applicability} in quantum-enhanced combinatorial optimization.
\section*{Conclusion}
In summary, we developed an algorithm that leverages quantum probabilities to construct an MST from a fully connected undirected graph. Our approach significantly reduces the qubit resources required, from a scaling of $\mathcal{O}(N^2)$ down to $\mathcal{O}(\log N)$. Furthermore, by applying the MDC constraint, the algorithm reliably produces the MST when the MDC exceeds 5, and in the worst-case scenario (MDC equal to 2), it returns the correct MST in $50\%$ of the instances. Notably, even in these cases, the results remain consistent with those obtained from other methods in the literature. Building upon these promising results, future investigations will focus on integrating the MDC constraint directly into the Hamiltonian and examining its impact on the system's behavior. In our current formulation, the MST is determined as the most probable path of the quantum walker, a process that maximizes entropy. This raises an intriguing question: might the most probable path differ from the path that minimizes the action? Further research will address these issues.

\begin{acknowledgments}
FSL and MCO thanks UNICAMP Postdoctoral Researcher Program for financial support, and Professor Eduardo Miranda (Department of Condensed Matter Physics) for the insightful discussions on evolution time that greatly benefited this work, MCO acknowledges CNPq for partial financial support. The authors are grateful to the Office of Naval Research - ONR Global and the Air Force Office of Scientific Research - AFOSR grant N62909-24-1-2012.
\end{acknowledgments}
\bibliography{main}

\end{document}